\documentclass[a4paper,UKenglish,cleveref, autoref, thm-restate]{lipics-v2021}
\nolinenumbers
\title{Algorithms for the local and the global postage stamp problem}

\ccsdesc[500]{Computing methodologies~Number theory algorithms}
\keywords{Postage stamp problem, Secure multi-party computation}

\newcommand{\LJKadd}{Univ.\ Grenoble Alpes, Laboratoire Jean Kuntzmann, UMR
  CNRS 5224, 150 place du Torrent, IMAG - CS 40700, 38058 Grenoble
  cedex 9, France}

\author{L\'eo Colisson Palais}{\LJKadd}{leo.colisson-palais@univ-grenoble-alpes.fr}{https://orcid.org/0000-0001-8963-4656}{}
\author{Jean-Guillaume Dumas}{\LJKadd}{Jean-Guillaume.Dumas@univ-grenoble-alpes.fr}{https://orcid.org/0000-0002-2591-172X}{}
\author{Alexis Galan}{\LJKadd}{Alexis.Galan@univ-grenoble-alpes.fr}{}{}
\author{Bruno Grenet}{\LJKadd}{BrunoGrenet@univ-grenoble-alpes.fr}{https://orcid.org/0000-0003-2057-5429}{}
\author{Aude Maignan}{\LJKadd}{Aude.Maignan@univ-grenoble-alpes.fr}{https://orcid.org/0000-0002-0905-2515}{}

\authorrunning{L. Colisson Palais, J-G. Dumas, A. Galan, B. Grenet,
  A. Maignan}
 \Copyright{L\'eo Colisson Palais, Jean-Guillaume Dumas, Alexis Galan,
   Bruno Grenet, Aude Maignan}

\hideLIPIcs  

\usepackage[colorinlistoftodos,backgroundcolor=yellow]{todonotes}

\usepackage{xspace}
\usepackage{booktabs}
\usepackage{multirow}

\usepackage{robust-externalize}\robExtConfigure{rename backup files for arxiv}
\robExtConfigure{
  enable fallback to manual mode
}


\usepackage[createShortEnv,commandRef=Cref]{proof-at-the-end}
\pratendSetGlobal{
  text link=,
  restate, 
}
\newcommand{\ProofInAppendix}[1]{{Proof in~\cref{#1}}}

\usepackage{algorithm}
\usepackage[noend]{algcompatible}
\newcommand{\To}{\textbf{to}\xspace}
\newcommand{\DownTo}{\textbf{down-to}\xspace}
\newcommand{\algorithmicreturn}{\textbf{return}}
\newcommand{\RETURN}{\STATE\algorithmicreturn{}\xspace}

\NewDocumentCommand{\bitwiseand}{}{\mathbin{\&}}
\RenewDocumentCommand{\COMMENT}{sm}{\IfBooleanTF{#1}{\hfill}{}{\footnotesize\textit{\(\triangleright\) #2}}}


\newcommand{\N}{{\mathbb{N}}}
\newcommand{\GStamps}{\href{https://github.com/jgdumas/GStamps}{GStamps}\xspace}



\usepackage{color,svgcolor}
\usepackage{hyperref}
\hypersetup{%
  breaklinks=true,
  colorlinks=true,
  linkcolor=darkgreen,
  citecolor=darkred,
  urlcolor=blue,
  hypertexnames=false,
  naturalnames=true,
}
\usepackage{cleveref}
\AddToHook{cmd/appendix/before}{%
    \crefalias{section}{appendix}%
    \crefalias{subsection}{appendix}
}

\begin{document}
\maketitle
\begin{abstract}
We consider stamps with different values (denominations) and same
dimensions, and an envelope with a fixed maximum number of stamp
positions.
The local postage stamp problem is to find the smallest value that
cannot be realized by the sum of the stamps on the envelope.
The global postage stamp problem is to find the set of denominations
that maximize that smallest value for a fixed number of distinct
denominations.
The local problem is NP-hard and we propose here a novel algorithm
that improves on both the time complexity bound and the amount of
required memory.
We also propose a polynomial approximation algorithm for the global
problem together with its complexity analysis.
Finally we show that our algorithms allow to improve secure
multi-party computations on sets via a more efficient homomorphic
evaluation of polynomials on ciphered values.
\end{abstract}
\section{Introduction}\label{sec:intro}
\pratendSetLocal{category=intro}
The postage stamp problem combines number theory and combinatorial
optimization.
It arises from a simple and practical question : given a limited set of
postage stamp denominations $A_k=\{a_1,a_2, \ldots , a_k\}$ and at
most $s$ places to stick the stamps, what is the first postage rate that cannot be formed?
We formally define the maximal postage rate, namely the $s$-range, in~\cref{def:srange}.
\begin{definition}\label{def:srange}
For a given positive integers $s$ and
 a set of $k$ positive and all different integers
$A_k = \{a_1,a_2, \cdots, a_k \}$, with $a_1<a_2<\ldots<a_k$,
the \emph{$s$-range} is the largest integer $n$ such that:
\(\forall{i}\leq{n},\exists(\lambda_1,\ldots,\lambda_k)\in\N^k,\sum_{j=1}^k\lambda_j\leq{s},i=\sum_{j=1}^k\lambda_ja_j\);
$n$ is then denoted $n_s(A_k)$.
Then the \emph{extremal $s$-range} is defined as
\(n_s(k)=\max_{A_k}(n_s(A_k))\) and a basis $A_k$ satisfying
$n_s(A_k)=n_s(k)$ is called an \emph{extremal basis}.
\end{definition}



We focus on the two main problems related to postage stamps, namely
\begin{definition}
The {\em local} postage stamp problem (LPSP) is the determination of the
$s$-range for~$s$ stamps of a given basis of size~$k$.
\end{definition}
\begin{definition}
The {\em global} postage stamp problem (GPSP) is the determination of an
extremal basis for given parameters~$k$ and~$s$.
\end{definition}

%
For example, if we choose $s=2$ and $k=3$, we have $n_2(3)=n_2(\{1,3,4\})=8$.
Moreover $n_1(k)=k$ and $n_s(1)=s$.
%
%
There is a simple upper bound on $n$: 
\begin{lemma}\label{lem:cbound} For $A_k=\{a_1,\cdots,a_k\}$ a stamp basis, we have, for all $s$, $n_s(A_k) \leq{sa_k}$.
\end{lemma}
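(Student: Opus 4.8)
The plan is to exploit that this is purely a statement about the maximal value that $s$ stamps can realize, so the specific arithmetic structure of the basis plays no role. Every admissible combination uses at most $s$ stamps, and each stamp has value at most $a_k$; the whole proof rests on pushing this single observation through the definition of $n_s(A_k)$.

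Concretely, first I would take any integer $i$ that is representable in the sense of \cref{def:srange}, i.e.\ fix $(\lambda_1,\ldots,\lambda_k)\in\N^k$ with $\sum_{j=1}^k\lambda_j\leq s$ and $i=\sum_{j=1}^k\lambda_j a_j$. Since $a_j\leq a_k$ for every $j$ and each $\lambda_j$ is nonnegative, replacing each $a_j$ by $a_k$ can only increase the sum, giving $i=\sum_{j=1}^k\lambda_j a_j\leq a_k\sum_{j=1}^k\lambda_j\leq s\,a_k$. Hence every integer realizable with at most $s$ stamps is at most $s\,a_k$.

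The key consequence I would then draw is that the integer $s\,a_k+1$ admits no such representation. Since \cref{def:srange} defines $n_s(A_k)$ as the largest $n$ for which \emph{all} integers $i\leq n$ are representable, and since $s\,a_k+1$ is not representable, no value $n\geq s\,a_k+1$ can qualify; therefore $n_s(A_k)\leq s\,a_k$. There is no genuine obstacle here: the only thing to verify is the inequality $\sum_j\lambda_j a_j\leq a_k\sum_j\lambda_j$, which is immediate from $a_j\leq a_k=\max A_k$ and $\lambda_j\geq 0$, so the bound holds uniformly in $s$ and for every basis.
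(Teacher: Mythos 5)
Your proof is correct: the bound $i=\sum_j\lambda_j a_j\leq a_k\sum_j\lambda_j\leq sa_k$ for any representable $i$ immediately forces $n_s(A_k)\leq sa_k$, which is exactly the elementary observation the paper relies on (it states the lemma as a ``simple upper bound'' and omits the proof entirely). Nothing is missing and no alternative route is taken.
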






Historically, Hans Rohrbach \cite{rohrbach1937beitrag} was the first to mention the postage stamp problem in the case where $s=2$.Then, a  lot of effort was devoted to the search for extremal
bases~\cite{rohrbach1937beitrag,Lunnon:1969:Postage,mrose_untere_1979,Selmer:1980:Postage,Selmer:1983:Postage,Selmer:1985:Postage,mitchell_another_1989,kirfel_extremal_1989,kirfel_extremal_1990,selmer_associate_1992,Challis:1993:Two,mossige_postage_2001,yu_new_2015}.
For instance,
\cite{challis_extremal_2010}, provides a survey of known
extremal $s$-range and extremal bases for small values of~$k$ and~$s$.
Then~\cite{shallit_computational_2002} shows that LPSP is NP-hard
under Turing reductions, but can be solved in polynomial time if k is
fixed.
Nonetheless, it is possible to give polynomial-time approximation
algorithms that compute interesting basis ; even if evaluating exactly
their $s$-range can be hard.

{\bf Contributions}.
The goal of this paper is to provide efficient
\emph{algorithms} for the both the local and the global problems.
We first survey the different known methods that can compute "good" basis
for the global problem : this means that for given a number of
denominations $k$ and positions $s$, we obtain a basis of
denominations with an $s$-range with the same asymptotic than the
(unknown) extremal basis.
We show that a recursive divide \& conquer algorithm, in which we
combine smaller basis, is the best \emph{polynomial}
strategy, among these, to generate such good bases.
We then prove that, asymptotically, the best recursion is where the
smaller bases are balanced.
For some basis, however, if one can afford more computational effort,
there are examples where a dynamic programming, for instance, can find
better bases, by choosing the best cut.

Then, we also propose a novel algorithm, asymptotically and
practically better than existing methods for the local problem.
In particular, with respect, e.g., to Mossige's
s-range~\cite{Mossige:1981:hrange}, we were able, first, to remove
an asymptotic dependency in $s$, and, second, to also remove an $s$
factor in the required memory.
We have implemented all the different algorithms in the
\GStamps library\footnote{\url{https://github.com/jgdumas/GStamps}}.

Finally, we propose a way to improve in practice some cryptographic
protocols relying on \emph{private} polynomial evaluation, where the
computations are made using fully homomomorphic encryption.

{\bf Paper organization}.
\Cref{sec:polytimeapprox} presents a compared analysis of the
Fibonacci sequence, the Alter and Barnett sequence, and the recursive
divide \& conquer algorithm for the approximation of the global stamp
problem.
Then, in~\cref{sec:algoforlargeenvelopes}, we give two novel
algorithms for the local problem that improve on the time
complexity bound and on the amount of required memory.
Finally, in~\cref{sec:applications} we show how good approximation
bases can speed up secure multi-party computations on sets via a more
efficient homomorphic evaluation of polynomials on ciphered values.
\section{Polynomial-time approximations for the global postage stamp}\label{sec:polytimeapprox}
\pratendSetLocal{category=polytimeapprox}

This section is devoted to approximation algorithms for the global problem :
Given $k$ and $s$, compute a lower bound on $n_s(k)$ that is as close as
possible to the correct (unknown) value. This goes through producing some basis
$A_{k,s}$ with $k$ denominations and proving a lower bound on $n_s(A_{k,s})$.

We review some existing constructions. First we focus on a construction of Alter
and Barnett~\cite{alter_remarks_1977}. They provide, for any $k \ge s$, a pretty
good basis $A_{k,s}$. In particular in the case $k = s$, $A_{k,s}$ is made of
the first $k$ even-indexed Fibonacci numbers. They provide the exact value for
$n_s(A_{k,s})$. We extend their construction to the case $k \le s$, and slightly
improve the construction in the case $k \ge s$. Although they provide the
correct value for $n_s(A_{k,s})$, their proof only prove the lower bound. We
provide a full proof of the upper bound that applies both to their original
construction and our improved version. We also analyse the asymptotics of
$n_s(A_{k,s})$ in both cases.

We then develop and analyse an idea of Mrose~\cite{mrose_rekursives_1974} that
builds a good basis $A_{k,s}$ with a recursive algorithm. We also analyse the
effect of using a database of small good base cases. Combining several
constructions, the database of base cases, we obtain the best known asymptotic
constructions as well as experimental validation of these results.
Using our \GStamps library, one can retrieve all the result presented in this part. Namely, our algorithms \textit{fibo, alba, greedy, geom} compute the $s$-ranges for (respectively) the Fibonacci, the Alter and Barnett and its variant with respect to~\cref{prop:balgreedy}, and the geometric basis constructions, for chosen $k,s$. Our algorithm \textit{basis} computes the best found basis for given $k,s$ using the recursive algorithm.

\subsection{Fibonacci sequence}

We first describe the Alter and Barnett's construction focus on the case~$k=s$. Let $(f_i)_{i\ge 0}$ be the Fibonacci sequence defined by
$f_0 = 0$, $f_1 = 1$ and $f_{i+2} = f_i+f_{i+1}$ for $i \ge 0$.
The Fibonacci stamps is the set $F_k=\{f_{2i}\}_{i\in[1,k]}$. Alter and
Barnett~\cite{alter_remarks_1977} state that $n_k(F_k) = f_{2k+1}-1$, while they
only prove that $n_k(F_k) \ge f_{2k+1}-1$.

\begin{propositionE}[\ProofInAppendix{sec:proofsapprox}]\label{thm:fibo}
The $k$-range of~$F_k$ is~$n_k(F_k)=f_{2k+1}-1$.
\end{propositionE}
\begin{proofE}
    We first prove by induction on $k$ that $n_k(F_k) \ge f_{2k+1}-1$. For $k = 1$, $n_1(F_1) = n_1(\{1\}) = 1 = f_3-1$.

    Consider now $k > 1$. By induction hypothesis, $n_k(F_{k-1}) \ge f_{2k-1}-1$. We prove that every integer $m < f_{2k+1}$ can be written as $\sum_{i=1}^k\lambda_i f_{2i}$ with $\sum_i\lambda_i \le k$. If $m < f_{2k-1}$, this is true by induction hypothesis. If $f_{2k-1} \le m < f_{2k}$, $m-f_{2k-2} < f_{2k}-f_{2k-2} = f_{2k-1}$. Therefore, $m-f_{2k-2}$ can be written with $(k-1)$ stamps by induction hypothesis. Hence $m$ can be written with $k$ stamps with the additional stamp $f_{2k-2}$. Similarly if $f_{2k} \le m < f_{2k+1}$, $m-f_{2k} < f_{2k+1}-f_{2k} = f_{2k-1}$ and $m$ can be obtained using the stamp $f_{2k}$ and $(k-1)$ other stamps.

    We now prove that $n_k(F_k) < f_{2k+1}$, that is $f_{2k+1}$ cannot be written as $\sum_{i=1}^k \lambda_i f_{2i}$ with $\sum_i\lambda_i \le k$. We assume otherwise and rewrite the sum as $f_{2k+1} = \sum_{i=1}^{2k}\lambda_i f_i$ where odd indexed $\lambda_i$s vanish. We describe a rewriting rule that keeps $\sum_i\lambda_i$ non-increasing, and let the following property invariant:
    \begin{itemize}
        \item[$(\mathcal P)$] let $j = \max\{i:\lambda_i > 1\}$, then $\lambda_{j+1} = 0$, $\lambda_i\in\{0,1\}$ and $\lambda_i\lambda_{i+1} = 0$ for $i > j+1$, and every odd indexed $\lambda_i$ for $i < j$ vanishes. (If no such $j$ exists, $j = 0$ by convention.)
    \end{itemize}
    The rewriting rule is as follows: Consider $\lambda_j>1$ as in $(\mathcal P)$. Using the identity $2f_j = f_{j-2} + f_{j+1}$, we replace $\lambda_j$ by $\lambda_j-2$, increase $\lambda_{j-2}$ and set $\lambda_{j+1} = 1$. If $\lambda_{j+2} = 1$, we use the identity $f_{j+1}+f_{j+2} = f_{j+3}$ to set $\lambda_{j+1}=\lambda_{j+2} = 0$ and $\lambda_{j+3} = 1$, and so on until no two consecutive $\lambda_i$s, $i > j+1$, are non-zero. If $\lambda_{j+2} = 0$ but $\lambda_j$ is still non-zero, we apply the same process from the pair $(\lambda_j,\lambda_{j+1})$. It is readily checked that property $(\mathcal P)$ holds after one step of rewriting.

    We apply the rewriting rule until one of the two events occur: Either no $j$ such that $\lambda_j > 1$ exists, or $j = 2$. In the first case, $f_{2k+1}$ is a sum of non-consecutive non-repeated Fibonacci numbers. This is impossible by Zeckendorf theorem, or simply by noting that the largest such sum is $\sum_{i=1}^k f_{2i} = f_{2k+1}-1$. In the second case, $f_{2k+1} = \lambda_2f_2 + \sum_{i\ge 4} \lambda_i f_i$ with $\sum_i \lambda_i \le k$ and $\lambda_2 > 1$. This implies that $\lambda_i = 0$ for $i > 2k$. Since there are at most $k-\lambda_2$ non-zero terms in the sum $\sum_{i\ge 4} \lambda_i f_i$, its value is at most $\sum_{\ell=\lambda_2+1}^k f_{2\ell} = f_{2k+1}-f_{2(\lambda_2+1)+1}$. But $\lambda_2 < f_{2(\lambda_2+1)+1)}$ and the sum cannot reach $f_{2k+1}$. In both cases, we obtain a contradiction.
\end{proofE}
Let $\varphi=\varphi_{\!_+}=\frac{1+\sqrt{5}}{2}\approx{1.618}$ be the
golden number, then $\varphi^2=1+\varphi$ and let
$\varphi_{\!_-}=\frac{1-\sqrt{5}}{2}=1-\varphi=-\varphi^{-1}$.
Then we have that:
\begin{equation}\label{eq:fibo}
f_n=\frac{1}{\sqrt{5}}(\varphi_{\!_+}^n-\varphi_{\!_-}^n)=
\left\lfloor\frac{1}{\sqrt{5}}\varphi^n\right\rceil
\end{equation}

Now this basis can be used also if $k\leq{s}$ by just adding integers
obtained by groups of $k$.
\Cref{cost:fibo} shows that the
resulting $s$-range then grows accordingly.

\begin{corollaryE}[\ProofInAppendix{sec:proofsapprox}]\label{cost:fibo}
For $k\leq{s}$, let $s=qk+r$ by Euclidean division, then:
 \(\left\lceil\frac{s}{\sqrt{5}}\right\rceil(1+\varphi)^k >
 n_s(F_k)>
\left(\left\lfloor\frac{s}{k}\right\rfloor\left(1-\frac{\sqrt{5}}{\varphi^{2k+1}}\right)+\frac{1}{\varphi^{2k-2r}}\right)\frac{\varphi}{\sqrt{5}}(1+\varphi)^k.\)
\end{corollaryE}
\begin{proofE}
For the upper bound, by~\cref{lem:cbound} with basis
$F_k$, we get $n_s(F_k)<sf_{2k}$.
Then, by~\cref{eq:fibo},
$f_{2k}=\left[\frac{1}{\sqrt{5}}\varphi^{2k}\right]=\left[\frac{1}{\sqrt{5}}(1+\varphi)^{k}\right]$.
For the lower bound, consider that
each group of $k$ stamps can reach $f_{2k+1}-1$ by~\cref{cost:fibo}.
Then the last group with $r$ stamps can reach at least $f_{2r+1}-1$.
Therefore, at least all integers up to
$q(f_{2k+1}-1)+f_{2r+1}-1$ can be reached.
This is
\(\left(q-\frac{1}{f_{2k+1}}+\frac{f_{2r+1}}{f_{2k+1}}\right)f_{2k+1}-1\),
from which one gets the announced lower bound using~\cref{eq:fibo},
with $q=\lfloor\frac{s}{k}\rfloor$.
\end{proofE}

\subsection{Greedy piece-wise linear algorithm}
Alter and Barnett provide a general construction for $k \ge s$. Writing $k =
qs+r$ with $r < s$, they defined a basis $G_k$ as the concatenation of $s$
blocks $B_1$, \dots, $B_s$ such that $|B_i| = q$ for $i < s$ and $|B_s| = q+r$,
and the denomination within each block are in arithmetic progression.

For each $i\in[s]$, let $G'_i=B_1\Vert\cdots\Vert B_i$ and let $U_i =
n_i(G'(i))$. They state the following induction relation:
\begin{equation}\label{eq:ABgreedy}
\text{\cite[Eq. (20)]{alter_remarks_1977}} \quad
U_{t+2}=(q+2)U_{t+1}-U_t+q,U_0=0,U_1=q.
\end{equation}
They deduce that
\begin{equation}\label{eq:ABbound}
n_s(G_k) = \sum_{i=1}^s\binom{s+i}{2i}q^i+r\sum_{i=0}^{s-1}\binom{s+i-1}{2i}q^i.
\end{equation}

As mentioned already, although this value is correct, they only prove
that $n_s(G_k)$ is at least the right-hand side of~\eqref{eq:ABbound}. We
provide in \cref{sec:proofsapprox} a full proof of the upper bound.

Solving exactly~\cref{eq:ABgreedy}, one gets that the dominant term of
\cref{eq:ABbound} is in fact:
\(\left[\frac{1}{2}\left(1-\sqrt{1-\frac{4}{q+4}}\right)+\frac{q}{\sqrt{q}\sqrt{q+4}}\right]\left(1+q\frac{1+\sqrt{1+\frac{4}{{q}}}}{2}\right)^s\).
The common ratio boils down to $1+\varphi$ when $q=1$.
Otherwise, the ratio is asymptotically close to $1+q=1+\frac{k}{s}$. 
With $\epsilon_q=\frac{\sqrt{1+4/q}-1}{2}$, we thus obtain a basis
whose $s$-range is asymptotically 
\begin{equation}\label{cost:greedy}
\alpha\left(1+\frac{k}{s}(1+\epsilon_q)\right)^s.
\end{equation}

Actually, we can improve Alter and Barnett's construction. The idea is to better
balance the blocks in $G_k$. Instead of using $s-1$ blocks of size $q$ and one of size $q+r$, one can use $(s-r)$ blocks of size $q$ and $r$
of size $q+1$, with the same construction of the blocks. The construction
provides a larger asymptotic for $n_s(G_k)$.

\begin{propositionE}[\ProofInAppendix{sec:proofsapprox}]\label{prop:balgreedy}
    Using the variant of Alter and Barnett's construction with $r$
    blocks of size $(q+1)$ and then $(s-r)$ blocks of
    size $q$, the dominant term of $n_s(G_k)$
    becomes
    \(\left(1+(q+1)(1+\epsilon_{q+1})\right)^r\left(1+q(1+\epsilon_q)\right)^{s-r}.\)
\end{propositionE}
\begin{proofE}
Use~\cref{eq:ABgreedy} with $q\leftarrow{q+1}$ and $s\leftarrow{r}$, for
the first $r$ blocks;
then use~\cref{eq:ABgreedy} again, now with $q$ and
$(U_0,U_1)=(U_{r-1},U_r)$ from the previous blocks.
\end{proofE}

\subsection{Recursive algorithm}
From two sub-bases $A_{k_1},B_{k_2}$ with respective $s$-ranges
$n_{s_1}(A_{k_1})$ and $n_{s_2}(B_{k_2})$, using
\cite[Hilfssatz~2]{mrose_rekursives_1974}, we can build a basis $C_k$
of size $k=k_1+k_2$ such that, for $s=s_1+s_2$:
\(n_{s}(C_k)\geq(n_{s_1}(A_{k_1})+1)(n_{s_2}(B_{k_2})+1)-1\).
As this relation is true for any basis, we more precisely have the
following~\cref{prop:Mrose}.
\begin{propositionE}[\ProofInAppendix{sec:proofsapprox}]\label{prop:Mrose}
\(n_{s_1+s_2}(k_1+k_2)\geq (n_{s_1}(k_1)+1)(n_{s_2}(k_2)+1)-1\)
\end{propositionE}
\begin{proofE} We recall the proof of~\cite{mrose_rekursives_1974}.
Let $A_{k_1}=\{a_1,\cdots,a_{k_1}\}$,
$B_{k_2}=\{b_1,\cdots,b_{k_2}\}$, $n_1=n_{s_1}(A_{k_1})$ and
$n_2=n_{s_2}(B_{k_2})$. We define $C_k$ as the concatenation of
$A_{k_1}$ and all the denominations of $B_{k_2}$ multiplied by
$n_1+1$, i.e. $C_k=A_{k_1}\Vert(n_1+1)B_{k_2}$. It is easy to see that
we can generate $i(n_1+1)$ for $i\in[n_2]$ with at most $s_2$ stamps,
using the stamps from $(n_1+1)B_{k_2}$. We also can generate all
$j\in[n_1]$ with at most $s_1$ stamps from $A_{k_1}$. Combined
together, we can obtain all $i(n_1+1)+j$ for $i\in[n_2]$ and
$j\in[n_1]$ with at most $s$ stamps.
Then, $n_s(C_k)\geq{n_2(n_1+1)+n_1}$.
By taking extremal bases for $A_{k_1}$ and $B_{k_2}$,
we obtain that $n_s(C_k)\geq{(n_{s_1}(k_1)+1)(n_{s_2}(k_2)+1)-1}$.
The result follows as by definition $n_{s}(k)\geq{n_s(C_k)}$.
\end{proofE}
The constructive proof of~\cref{prop:Mrose} thus provides a recursive Divide \& Conquer
algorithm computing a basis by recursively splitting both
the number of denominations and the number of stamps\footnote{This can
  give rise to a natural dynamic   programming version in order to
  determine the best cuts of $k$ and $s$ into $k=k_1+k_2$ and
  $s=s_1+s_2$, but then the algorithm is not  polynomial-time
  anymore.}.
In the rest of this section, we analyze the behavior of this recursive
Divide \& Conquer polynomial-time algorithm.
We start by determining the asymptotic, with the help of a simpler
function in~\cref{lem:po2}. This will provide a lower bound when
$k_1=k_2$ and~$s_1=s_2$ are both powers of two, dividing both
parameters by $2$, until one of~$k$ or~$s$ is $1$.

\begin{lemmaE}[\ProofInAppendix{sec:proofsapprox}]\label{lem:po2}
Let~$f:\N\rightarrow\N$ such that $f(0) = T$ and, for $i\ge 0$,
$f(i+1) = (f(i)+1)^2-1$. Then $f(i) = (1+T)^{2^i}-1$.
\end{lemmaE}
\begin{proofE}
\begin{enumerate}
\item First, we have that $T=1+T-1=(1+T)^{2^0}-1$;
\item Then, by induction, we directly obtain that
  $((1+T)^{2^i}-1+1)^2-1=(1+T)^{2^{i+1}}-1$.
\end{enumerate}
\end{proofE}

Now, suppose that $2^{i+j}=k\geq s=2^i$. We use a \emph{midpoint
  construction}, cutting both parameters in halves.
The recusive cuts then stops with the minimal case such that
$n_1(\frac{k}{s})=\frac{k}{s}$.
Thus, letting $T=\frac{k}{s}$ in~\cref{lem:po2}, we obtain the
following lower bound for a recursive Divide \& Conquer algorithm cutting in halves:
 \begin{equation}\label{cost:midpoint}
  n_s(k)\geq\left(1+\frac{k}{s}\right)^s-1.
\end{equation}
This is similar to the asymptotics of~\cref{cost:greedy}, but slightly
less interesting since there the common ratio is
$\left(1+q(1+\epsilon_q)\right)$, indeed slightly better than
$1+q=\left(1+\frac{k}{s}\right)$.

For the opposite case, namely $2^{i+j}=s\geq k=2^i$, now the minimal
case of the midpoint construction gives $n_\frac{s}{k}(1)=\frac{s}{k}$
resulting to the dual lower bound:
\begin{equation}\label{cost:rec}
  n_s(k)\geq\left(1+\frac{s}{k}\right)^k-1.
\end{equation}




This is now to be compared with~\cref{cost:fibo} with the same
exponent but a common ratio $(1+\varphi)$.
We see that for small values of $k$ the recursive algorithm will be
better, and then Fibonacci's algorithm will be better when $k$ is
closer to $s$ with ratio $\frac{s}{k}<\varphi$.

Now the lower bound of~\cref{cost:rec} can be refined by stopping the
recursion earlier, for instance at $k=2$.
Indeed, $n_s(2)=(s^2+6s+1)/4=\frac{s}{2}(\frac{s}{2}+3)+\frac{1}{4}$
(see for instance~\cite{alter_remarks_1977} and references therein).
We then obtain as minimal case
$n_\frac{2s}{k}(2)={\frac{s}{k}(\frac{s}{k}+3)+\frac{1}{4}}$.
Combined with~\cref{lem:po2}, this results in the following improved
bound:
\begin{equation}\label{cost:rectwo}
  n_s(k)\geq\left(1+\frac{s}{k}\left(\frac{s}{k}+3\right)+\frac{1}{4}\right)^{\frac{k}{2}}-1.
\end{equation}

\begin{remark}\label{rem:geom}
Note that a possible basis could be in geometric
progression $\{1,r,\ldots,r^{k-1}\}$, as
in~\cite[Eq.~(26)]{Stohr:1955:ZahlenreiheI}.
In this case, the best common ratio is $r\approx{1+\frac{s+1}{k}}$,
which yields a range never exceeding
$\left(1+\frac{s+1}{k}\right)^k-2$~\cite{Stohr:1955:ZahlenreiheI,Stohr:1955::ZahlenreiheII,Tripathi:2008:Postage}.
If $k=2$, this is not better than $(s^2+6s+1)/4$, and if $k\geq{3}$,
this is also always less interesting than the recursive algorithm
with~\cref{cost:rectwo}.
\end{remark}

As with~\cref{cost:rec,cost:rectwo}, the lower bound of~\cref{cost:midpoint} can be refined by
stopping the recursion earlier, for instance at $s=2$.
\cite{mrose_untere_1979} has indeed shown that
$n_2(k)\geq\frac{2}{7}k^2$.
Thus, $n_2(\frac{2k}{s})\geq\frac{8}{7}\left(\frac{k}{s}\right)^2$ and
we obtain:
\begin{equation}\label{cost:midpointtwo}
  n_s(k)\geq\left(1+\frac{8}{7}\left(\frac{k}{s}\right)^2\right)^{\frac{s}{2}}-1.
\end{equation}



We now turn to the general case, with $k$ and $s$ not powers of two.
Then, the following~\cref{lem:midpoint}, shows that asymptotically,
the best lower bound is given by any cut with the same fraction.
In particular, the midpoint construction of~\cref{lem:po2} thus
also provides the best lower bound when $k\leq{s}$.

\begin{lemmaE}[\ProofInAppendix{sec:proofsapprox}]\label{lem:midpoint}
Let $G(k,s)=(1+s/k)^k-1$ and
$H_{k,s}(t,u)=(G(t,u)+1)(G(k-t,s-u)+1)-1$ with $t\in\{1..k-1\}$ and
$u\in\{1..s-1\}$.
Then $G(k,s)$ is a local maximum of $H_{k,s}$.
\end{lemmaE}
\begin{proofE}
The first derivative of $g$ in $u$ is
$g_1=-t^{-t}(t+u)^{t-1}(k-t)^{t-k}(k-t+s-u)^{k-t-1}(ku-st)$ whose
zeroes are $k-t+s$ and $st/k$. With $t\in\{1..k-1\}$ and
$u\in\{1..s-1\}$,
the former is too large. Now the second derivative of $g$ at
$u=st/k$ is
$g_2=-\frac{1}{t}k^{3-k}(k+s)^{t-2}(k-t)^{t-k}((k-t)(k+s))^{k-t-2}(k+s)^2(k-t)$
and is thus negative when $t\in\{1..k-1\}$. Therefore, $u^*=st/k$ is local
maximum of $H_{k,s}$.

Finally, let $t=k/v$. With this, we get that $u^*=st/k=s/v$.
Then,
$H_{k,s}(k/v,s/v)=(G(\frac{k}{v},\frac{s}{v})+1)(G(k\frac{v-1}{v},s\frac{v-1}{v})+1)-1=(1+s/k)^{k/v}(1+s\frac{v-1}{v}(k\frac{v-1}{v})^{-1})^{k\frac{v-1}{v}}-1=(1+s/k)^{k/v+k-k/v}-1=G(k,s)$,
independently of the factor $v$, and the lemma is proven.
\end{proofE}

This analysis is of course quite rough for small values of $k$
and $s$, where the involved constants also matter.
For very small values, extremal bases and the extremal $s$-range are known,
see e.g.~\cite{challis_extremal_2010}, and can thus be fed to
the recursive algorithm before reaching $k=1$ or $s=1$, as was shown
for instance with~\cref{cost:rectwo,cost:midpointtwo}.

\begin{remark}\label{rq:DCfibo}
In practice this makes the recursive algorithm asymptotically better than
Fibonacci, even with $k=s$.
For instance indeed, the best known sequence for $k=s=5$ is
$\{1,4,9,31,51\}$ with $s$-range $126$ (while Fibonacci
gives $\{1,3,8,21,55\}$ with $s$-range $88$).
On the one hand, the recursive algorithm with $k=s=5\cdot2^i$,
stopping the recursion at threshold~$5$, has its lower bound
reaches at least
$(1+126)^{2^i}\approx{2.635}^{5\cdot2^i}$,
from~\cref{lem:po2}.
On the other hand, from~\cref{cost:fibo}, Fibonacci sequence will
never give a range greater than
$(1+\varphi)^{5\cdot2^i}\approx{2.618}^{5\cdot2^i}$.
This is also shown in practice in~\cref{fig:recfibo}.
\textEnd{
In~\cref{fig:recfibo}, we compare the recursive algorithm and the Fibonacci
  sequence, using the programs \texttt{./bin/basis} and
  \texttt{./bin/fibo} of the \GStamps library.
\begin{figure}[ht]\centering
\input{data_recfibo}
\caption{Range ratio between the recursive algorithm and the Fibonacci
  sequence, for~$k=s$}\label{fig:recfibo}
\end{figure}
}
\end{remark}

\begin{remark}\label{rq:DCgreedy}
The recursive algorithm with good base cases performs also better than
the Greedy algorithm.
We have seen that~\cref{cost:greedy} again gives a common ratio of
$1+\varphi$ when $k=s$ and not much better than $q=k/s$ otherwise,
namely $q(1+\epsilon_q)$.
The dominant term of the lower bound is thus
$(1+q(1+\epsilon_q))^s=q^s(1+\epsilon_q+\frac{1}{q})^s$,
with $1+\epsilon_q+\frac{1}{q}=1+\frac{1}{2}(\sqrt{1+4/q}-1)+\frac{1}{q}$
asymptotically close to
$1+\frac{1}{2}(1+2/q-1)+\frac{1}{q}=1+\frac{2}{q}$.
This is to be compared to~\cref{cost:midpointtwo},
$(1+\frac{8}{7}q^2)^{s/2}=q^s\sqrt{8/7+1/q^2}^s$,
where $\sqrt{8/7+1/q^2}=\sqrt{8/7}\sqrt{1+7/(8q^2)}$ is asymptotically close to
$\sqrt{8/7}+\sqrt{7/8}\frac{1}{2q^2}>1$.
\end{remark}

Overall, using~\cref{rq:DCfibo,rq:DCgreedy}, we
obtain~\cref{tab:regimes} that shows the dominant terms of cost of the
algorithms, depending on how $k$ and $s$ compare.

\begin{table}[htbp]\centering
\caption{Asymptotic regimes of the approximation algorithms for
  different $k$ and $s$}\label{tab:regimes}
\begin{tabular}{cc}
\toprule
$k\leq{s}$ & $k\geq{s}$ \\
\midrule
Fibonacci: $\frac{s}{k}(1+\varphi)^k $ & Greedy:
\(\left(1+\frac{k}{s}\right)^s\)\\
D\&C:
$\left(1+\frac{s}{k}\left(\frac{s}{k}+3\right)+\frac{1}{4}\right)^{\frac{k}{2}}$
&
D\&C:
$(1+\frac{8}{7}\left(\frac{k}{s}\right)^2)^{s/2}$
\\
\bottomrule
\end{tabular}
\end{table}

From~\cref{rq:DCgreedy}, the recursive algorithm with threshold is
thus asymptotically better.
It is shown in~\cref{fig:recbalgreed} that it is also better for not
so large values of $q=k/s$.
Indeed, in~\cref{fig:recbalgreed}, this is shown by the curve
"Mid-cut"
(except for the point $k=26,s=8$, since
even if the extremal bases for $k=12,s=4$ have $s$-range
$700$~\cite[Addendum]{challis_extremal_2010}, but our initial implementation
only finds a $s$-range of $566$ at
$k=13,s=4$).
In this figure, "Mid-cut" is obtained with the recursive algorithm with
threshold, always dividing at the middle point $k/2,s/2$.
In the second curve, "First-cut", we select the best cutting
point, for the first recursive call, with dynamic programming,
among $1..k-1,1..s-1$. It then recursively further divide always at the
middle point. Then "Best-cuts" uses a dynamic programming selection at
each recursive call.
Finally, the curve "Good-basis", incorporates, as base
cases to the midpoint construction of "Mid-cut", all the good basis
obtained in a programming
contests\footnote{Al Zimmermann's "Son Of Darts" \url{http://azspcs.com/Contest/SonOfDarts/FinalReport}}.
There, a basis at $k=13,4$ has $s$-ranges up to $878$,
improving the result for $k=26,s=8$.
Now, again, there is a drop at $k=62,s=8$, even for this curve
"Good-basis": this is now because the brute force searches of the
contest did provide good bases only until $k=30,s=4$. Therefore, after
$k=31$ we have a supplementary recursive call which start to reduce
the advantage obtained with better base cases.
Note that "Mid-cut'' and "Good-basis" represent polynomial-time
algorithms, implemented in \texttt{./bin/basis}, while "First-cut''
and "Best-cuts'', both implemented in \texttt{./bin/dynprg} with
different parameters, are not.

\begin{figure}[htbp]\centering
\input{data_recbg}
\caption{$s$-range ratio, Recursive with threshold
  vs~\cref{prop:balgreedy},~$s=8$}\label{fig:recbalgreed}
\end{figure}

\section{$s$-range algorithms for large envelopes}\label{sec:algoforlargeenvelopes}
\pratendSetLocal{category=algoforlargeenvelopes}

\subsection{Mossige s-range}
For the sake of completeness we give in \cref{appendix:proofsalgoforlargeenvelopes}, \cref{alg:mossige}, a version
of the algorithm of~\cite{Mossige:1981:hrange} computing the $s$-range
of a basis.
Roughly, the idea is to compute the values reached with each number of
possible stamps from $1$ to $s$ : with each additional stamp, all the
previous values are examined (at most $sa_k$), and the previously
reached added to one of $a_1$ to $a_k$ are then attainable. We thus
have~\cref{prop:mossigecost}.

\textEnd{ {For the sake of completeness we give in~\cref{alg:mossige}, a
  version of the algorithm of~\cite{Mossige:1981:hrange}.}
  \begin{algorithm}[H]
    \caption{ Mossige
s-range~\cite{Mossige:1981:hrange}}\label{alg:mossige}
    \begin{algorithmic}[1]
      \REQUIRE Basis $A_k=\{a_1=1<a_2<\ldots<a_k\}$, stamps $s$.
      \ENSURE The $s$-range $n_s(A_k)$.
      \STATE Let Reached~$=[0,\ldots,0]\in\{0,1\}^{a_ks}$;
      \FOR{$j=1$ \To $k$}
      \STATE Reached$[a_j]\leftarrow{1}$;
      \ENDFOR
      \FOR{$d=1$ \To $s-1$}
      \FOR{$i=a_kd$ \DownTo $d$}\COMMENT*{$a_kd$ is the largest
        reached for now}
      \IF{Reached$[i]==1$}
      \FOR{$j=1$ \To $k$}
      \STATE Reached$[i+a_j]\leftarrow{1}$;\COMMENT*{now reached with
        up to one more stamp}
      \ENDFOR
      \ENDIF
      \ENDFOR
      \ENDFOR
      \FOR{$i=2$ \To $a_ks$}
      \IF{Reached$[i]== 0$}
      \RETURN $i-1$;\COMMENT*{smallest value not reached}
      \ENDIF
      \ENDFOR
      \RETURN $a_ks$;
    \end{algorithmic}
  \end{algorithm}
}

\begin{proposition}[\cite{Mossige:1981:hrange}]\label{prop:mossigecost}
\Cref{alg:mossige} is correct and requires $O(ks^2a_k)$ assignments.
\end{proposition}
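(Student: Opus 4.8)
The plan is to prove correctness through a single loop invariant on the array Reached, established by induction on the outer-loop counter $d$, and then to obtain the complexity bound by a direct count of the innermost assignment. The invariant I would maintain is: immediately after the $d$-th pass of the outer loop, and also after the initialization loop (viewed as the case $d=0$), one has Reached$[i]=1$ if and only if the positive integer $i$ can be written as $\sum_j \lambda_j a_j$ with $\sum_j \lambda_j \le d+1$. For the base case $d=0$, the initialization loop marks exactly the indices $a_j$, which are precisely the values reachable with a single stamp; since $a_1=1$ the value $1$ is marked.

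For the inductive step I would argue both inclusions. If $i$ is reachable with at most $d+1$ stamps, then either it is already reachable with at most $d$ stamps, hence marked before the pass and never un-marked, or it needs exactly $d+1$ stamps, so $i=i'+a_j$ with $i'$ reachable using exactly $d$ stamps; then $d \le i' \le a_k d$ places $i'$ in the scanned range, Reached$[i']=1$ when it is read, and the inner loop sets Reached$[i]$. Conversely, any index newly set during the pass is $i'+a_j$ for some scanned $i'$ with Reached$[i']=1$ at read time. The one place the argument needs care is the \emph{decreasing} traversal of $i$ from $a_k d$ down to $d$: every write targets an index $i'+a_j > i'$, which has already been visited, so no value read during the pass can have been altered by an earlier write of the same pass. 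Thus Reached$[i']$ at read time still reflects the pre-pass state, i.e.\ reachability with at most $d$ stamps, whence $i=i'+a_j$ is reachable with at most $d+1$ stamps. This downward scan is exactly what forbids chaining several additions within one pass and pins each pass to adding a single stamp; I expect it to be the main obstacle to a fully rigorous write-up.

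Applying the invariant with $d=s-1$ shows that after the outer loop Reached$[i]=1$ iff $i$ is reachable with at most $s$ stamps. The final scan then returns the least $i\ge 2$ with Reached$[i]=0$, minus one, so that every smaller value (including $1$) is reachable while $i$ is not; if no such $i$ exists up to $a_k s$, then every value up to $a_k s$ is reachable, and since $n_s(A_k)\le s a_k$ by \cref{lem:cbound} the returned value $a_k s$ is correct. I would also check that the array is large enough, as the largest index ever written is $a_k(s-1)+a_k = a_k s$. For the complexity, I would count the innermost assignment Reached$[i+a_j]\leftarrow 1$: the outer loop runs $s-1$ times, for fixed $d$ the index $i$ ranges over at most $a_k d$ values, and each triggers at most $k$ assignments, so the total is $\sum_{d=1}^{s-1} k\,a_k d = k a_k \frac{(s-1)s}{2} = O(k s^2 a_k)$, which dominates the $O(k)$ initialization and the $O(a_k s)$ final scan and yields the claimed bound.
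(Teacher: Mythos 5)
Your proof is correct. The paper does not actually spell out a correctness argument for \cref{alg:mossige}: it attributes the proposition to Mossige and only records the operation count, noting that the number of modifications of the binary table is below $a_k\frac{s(s-1)}{2}k$ --- exactly the sum $\sum_{d=1}^{s-1}k\,a_k d$ you compute, so your complexity analysis coincides with the paper's. The correctness part is therefore your own contribution, and it holds up. The point you single out as delicate --- that the downward traversal guarantees every write targets an index $i'+a_j>i'$ that has already been visited, so every read within a pass still reflects the pre-pass state and each pass adds exactly one stamp --- is indeed the crux, and your treatment of it is sound; it is what makes the invariant ``Reached$[i]=1$ iff $i$ is a sum of at most $d+1$ stamps'' go through in both directions. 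Your range check $d\le i'\le a_k d$ for a sum of exactly $d$ stamps relies on $a_1=1$, which is part of the algorithm's input specification, and the appeal to \cref{lem:cbound} to justify the fall-through return value $a_ks$ correctly closes the boundary case. No gaps.
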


More precisely, the number of assignments (modifications of the
binary table storing whether each value is so far attainable or not)
is lower than $a_k\frac{s(s-1)}{2}k$, with $n_s(A_k) \leq sa_k$.

\subsection{Incremental LPSP algorithm}\label{subsec:incrementalLPSP}

We describe in \cref{alg:leo} a new algorithm computing the maximum
coverage given a fixed set of stamp values $a_1 < \dots < a_k$ and a
maximum number of allowed stamps $s$. This algorithm outperforms
\cite{Mossige:1981:hrange} both asymptotically and in practice when
$s$ is large enough, and is also arguably simpler. Our new algorithm
runs in time $O(k s a_k)$ instead of $O(k s^2 a_k)$, and we propose in
the next section an improvement to have a better asymptotic time
complexity $O(kn)$ with a smaller memory footprint scaling with
$O(a_k)$ instead of $O(n)$ that is needed in this first version and in
the s-range algorithm~\cite{Mossige:1981:hrange}.

Our first algorithm keeps two tables $T$ and $U$ where all the cells of
$T$ are initialized to a large enough number, say $s+1$, except for
$T[0] = 0$. The semantics of this array being that we know how to
generate the integer $i$ using $T[i]$ stamps when $T[i] \leq s$, while
we don't know how to generate $i$ otherwise. Table $U$ is initialized
to $1$, the goal being that at the end, if $T[i]<s$, $U[i]$ should contain the index $j$ of the greatest
value $a_j$ involved in any optimal decomposition of $i$.
We maintain a cursor starting at position $0$: the cursor at position $i$ means that we know the optimal number of stamps for all elements  $\leq i$.
We then gradually expand
our knowledge: if we can generate $i$ using $T[i]$ stamps, we can also
use one extra stamp (i.e.\ $T[i]+1$ stamps in total) to generate the
number $i+a_j$ for any $j \in \{U[i],\dots,k\}$ (note that we start the loop at $U[i]$, hence saving us from testing small stamps). If this combination is
better than the previously known combination to generate $i+a_j$, we
update its value, i.e.\ if $T[i+a_j]>T[i]+1$ we just need to compute
$T[i+a_j] \gets  T[i]+1$ and $U[i+a_j] \gets j$. It is easy to see that after this
operation, $T[i+1]$ now contains the optimal number of stamps to
generate $i+1$, where $T[i+1] > s$ means that it is impossible to
generate $i+1$ in less than $s$ steps. It is also possible to show (see \cref{prop:leoalgocorrect}) that $U[i]$ contains the highest possible stamp involved in any optimal decomposition of $i$.
This way, if $T[i+1] > s$, we
found the first element that cannot be generated with at most $s$
steps, hence the $s$-range is ${1,\dots,i}$. Otherwise we can just
restart this procedure from the new optimal construction of $i+1$
until it finishes (see~\cref{alg:leo}
in~\cref{appendix:proofsalgoforlargeenvelopes} for the details).

\begin{propositionE}[\ProofInAppendix{appendix:proofsalgoforlargeenvelopes}]\label{prop:leoalgocorrect}
  There exists an LPSP algorithm which
  requires $O(k s a_k)$ assignments or more precisely $O(kn_s(A_k)+sa_k)$.
\end{propositionE}
\begin{proofE}

  We exhibit the following \cref{alg:leo} and prove its correctness
  and complexity bounds.
  \begin{algorithm}[htbp]\caption{Incremental LPSP}\label{alg:leo}
    \begin{algorithmic}
      \REQUIRE Basis $A_k=\{a_1=1<a_2<\ldots<a_k\}$, stamps $s$.
      \ENSURE The $s$-range $n_s(A_k)$.
      \STATE $T \gets [0,s+1,s+1, \dots, s+1] \in \N^{(s+1) a_k+1 }$\COMMENT*{$s+1$ plays the role of $\infty$}
      \STATE $U \gets [1,1,1, \dots, 1] \in \N^{(s+1) a_k+1 }$
      \STATE $i \gets 0$
      \REPEAT
        \FOR{$j$ in $U[i]..k$}
          \IF{$T[i + a_j]>T[i]+1$}
            \STATE  $T[i + a_j] \gets  T[i] + 1$\COMMENT*{Changes when we find a better way to obtain $i + a_j$}
            \STATE $U[i+a_j] \gets j$
          \ENDIF
        \ENDFOR
      \STATE $i \gets i+1$
      \UNTIL{$T[i] > s$}
      \RETURN $i-1$
    \end{algorithmic}
  \end{algorithm}

  The correctness of  \cref{alg:leo} follows the intuition given in the introduction of \cref{subsec:incrementalLPSP}. More precisely, before showing the optimality of $T[i]$, we note that it is easy to show by a simple induction that the algorithm maintains at any time the following invariant: for any $i$, if $T[i] \leq s$, then it is possible to construct $i$ using $T[i]$ stamps where the decomposition involve at least once the stamp $a_{U[i]}$ (or no stamp in the special case of $0$), and where all other involved stamps are smaller or equal to $a_{U[i]}$: the initialization of $T$ is done so that all items but $T[0]$ are set to $s+1$ hence maintaining this invariant ($0$ is made of $0$ stamps), and the only instruction that modifies $T$ is $T[i + a_j] \gets T[i] + 1$ when $T[i] < T[i + a_j] \leq s + 1$, hence $T[i] < s$, so we can apply the induction hypothesis: by induction, if $T[i] < s$, it is possible to decompose $i$ into $T[i]$ stamps, hence by adding one more stamp, we can generate $T[i+a_j]$ with $T[i]+1$ stamps. Moreover, by induction, $T[i]$ can be decomposed using stamps smaller or equal to $U[i]$, and since $j > i$ and the basis is assumed to be sorted, the decomposition of $i + a_j$ uses the stamp $a_j = a_{U[i + a_j]}$ and this stamp is the higher stamp in the decomposition. Hence, the invariant is maintained.

  Next, we show by induction the optimality of this value, by showing that when starting the $t$-th iterations (indexing from $0$) of the main loop, $T[i]$ for $i \leq t$ contains $s+1$ if it is impossible to generate them with less than $s$ stamps, and otherwise it contains the optimal number of stamps that can be used to generate $i$, while $U[i]$ contains the index of the highest stamp involved in any possible optimal decomposition (and $1$ when $i=0$). This is trivial when $t = 0$ since $T[0] = 0$ as $0$ can be generated via $0$ stamps. Then, assume $t > 0$. We do now a case analysis:
  \begin{itemize}
  \item We first consider a first case where $t$ can be decomposed optimally (in term of number of stamps) into $t = \sum_i \lambda_i a_i$ for some $\lambda_i$'s with $\sum_i \lambda_i \leq s$, and such that this decomposition maximizes the highest stamp $a_l$ involved with a non-zero coefficient $\lambda_l > 0$ if multiple such optimal decomposition exist. Then $t - a_{l}$ can be generated via $(\sum_i \lambda_i) - 1$ stamps ($t - a_{l} = (\lambda_{l} - 1) a_{l} + \sum_{i \neq l} \lambda_i a_i$) and this decomposition is optimal (if $t - a_{l}$ could be generated from less stamps, then this directly gives a better decomposition in stamps for $t$ by simply adding the stamp $a_{l}$ to this better decomposition, which is absurd since the decomposition of $t$ was optimal). Hence by induction, we know that after the $t - a_{l}$-th loop, we have $T[t] = T[t-a_l] + 1$ (it cannot be smaller otherwise we could again find a better decomposition of $t$ which is absurd since we assumed it was optimal) corresponding to the optimal stamp decomposition of $t$. It is easy to see that this value cannot be changed in later iterations: it cannot be increased because of the test $T[i+a_j] > T[i] + 1$, and it cannot be decreased otherwise this would directly lead to a better decomposition of $t$. And it is also easy to see that $T[t]$ cannot have been set to its final value before the $t-a_l$-th iteration of the loop, because if it was set to its optimal value during iteration $t-x$ with $x > a_l$, it means that we could extract from it an optimal decomposition involving the stamp $x$, which is absurd since $x > a_l$ and we we picked the optimal decomposition that was maximizing the highest possible stamp. Hence, it means that $U[i]$ is set during the $t-a_l$-th iteration of the loop to $a_l$, the highest possible stamp involved in the decomposition, concluding our induction for this case.
  \item The second case is when $t$ cannot be decomposed into $s$ stamps or less. But thanks to the invariant shown in the first part of the proof, we already know that $T[t] = s + 1$, concluding this case analysis and this induction proof.
  \end{itemize}

  The complexity bound is trivial to prove: the main loop repeats until $i$ equals $n+1$, and $i$ is incremented by $1$ at each step, hence it repeats $O(n)$ times, while doing $k$ assignments during the \emph{for} loop, i.e.\ overall $O(kn)$ assignments. But trivially $n \leq s a_k$ (the highest number that can be generated places the maximum number of stamps with the maximum value). Moreover, the creation of $T$ requires $O(s a_k)$ assignments, hence the overall number of assignments is $O(k s a_k)$.
\end{proofE}


\subsection{Sliding window}

We describe in \cref{alg:sliding} an improvement of \cref{alg:leo} that is more space and time efficient, the size of the memory $O(a_k)$ scaling only with the maximum stamp value $a_k$, and we also avoid to initialize $T$ to a too large value $s a_k$ approximating $n$, hence replacing the dependency on $s a_k$ into a dependency on $n$. This algorithm may also perform better due to better cache optimizations since the table is now smaller and may even entirely fit in the cache.

\begin{algorithm}[ht]\caption{Sliding window incremental LPSP}\label{alg:sliding}
  \begin{algorithmic}
    \REQUIRE Basis $A_k=\{a_1=1<a_2<\ldots<a_k \}$, stamps $s$.
    \ENSURE The $s$-range $n_s(A_k)$.
    \STATE $l \gets \lceil \log (a_k+1) \rceil$ \COMMENT*{The size of the array is a power of $2$ to have efficient modulo}
    \STATE $w \gets 2^l - 1$ \COMMENT*{We compute $x \bmod l$ via the bitwise ``and'' $x \bitwiseand w$ for efficiency reasons}
    \STATE $T \gets [0,s+1,s+1,\dots,s+1] \in \N^{2^l}$
 \STATE $U \gets [1,1,1, \dots, 1] \in \N^{2^l }$
    \STATE $i \gets 0$
    \REPEAT
     \FOR{$j$ in $U[i]..k$}
\IF{$T[(i + a_j ) \bitwiseand w]>T[i \bitwiseand w]+1$}
     \STATE  $T[(i + a_j) \bitwiseand w] \gets  T[i \bitwiseand w] + 1$
\STATE $U[(i + a_j) \bitwiseand w] \gets j$
\ENDIF
     \ENDFOR
     \STATE $T[i \bitwiseand w] \gets s+1$
     \STATE $i \gets i+1$
    \UNTIL{$T[i \bitwiseand w ] > s$}
    \RETURN $i-1$
  \end{algorithmic}
\end{algorithm}

The improvement of~\cref{alg:sliding} over~\cref{alg:leo} comes from the main observation that in \cref{alg:leo} all operations on the array lies in a window of size $a_k+1$. Hence, we can only keep this window and discard the rest of the array by considering all indices modulo $a_k + 1$. To improve the efficiency of the algorithm, we additionally consider a larger window whose size is a power of $2$. This way, modulo operations can be implemented via a simple bitwise ``\emph{and}''.
We note that this algorithm is also slightly more efficient it term of
time due to the fact that it does not need to create a large table of
size $s a_k$ (a rough estimation of $n$). This way, the complexity of
this algorithm scales with $n$ instead of $s a_k$, which may be
interesting when $n \ll s a_k$.

\begin{propositionE}[\ProofInAppendix{appendix:proofsalgoforlargeenvelopes}]
  \Cref{alg:sliding} is correct and requires $O(k s a_k)$ assignments or more precisely $O(k n_s(k) + a_k)$.
\end{propositionE}
\begin{proofE}
  The correction of this algorithm is a direct consequence of the correction of \cref{prop:leoalgocorrect}, combined with the fact that in \cref{prop:leoalgocorrect} after $i$ iterations the loop only modified elements in the table whose index is smaller than $i + a_k$, and will never access again elements whose index is smaller than $i$. Hence, we can write the new values on top of the items that will never be read again (and initialize them as we go), which is exactly the point of doing a modulo with a modulus $2^l$ larger than $a_k + 1$, implemented via the bitwise \emph{and} operation with $w = 2^l - 1$ for efficiency reasons.

  The complexity bound is analog to \cref{prop:leoalgocorrect}, except that the creation of the original table does $O(a_k)$ assignments instead of $O(s a_k)$, while the loop still stops after $n$ iterations doing $k$ assignments at each iteration.
\end{proofE}

\subsection{Comparison to the state of the art}

Table \ref{tab:comp} compares the asymptotical complexity of our two algorithms
(\cref{alg:sliding,alg:leo}) with \cite{Mossige:1981:hrange}.

\begin{table}[ht]\centering
\caption{Comparison of asymptotic complexity}\label{tab:comp}
\begin{tabular}{cccc}
\toprule
 & \cite{Mossige:1981:hrange} &\cref{alg:leo} &  \cref{alg:sliding} \\
\midrule
Time &  $O(ks^2a_k)$ &$O(ks a_k)$& $O(ks a_k)$\\
Memory & $O(sa_k)$& $O(sa_k)$ & $O(a_k)$
\\
\bottomrule
\end{tabular}
\end{table}

 We also compare those three algorithms based on
numerical simulations to also assert the practicality of our
approach.
\cite{Mossige:1981:hrange} remains faster only for very small values
of $s$ (smaller than $5$ in our simulations)
as shown in \cref{fig:cover_s05_s10} for $s=5$ and $s=10$,
or \cref{fig:cover_k15,fig:cover_k35} for various $s$.
Note that a the early termination condition of Selmer's Lemma
(see~\cite{Selmer:1980:Postage,Selmer:1983:Postage,Selmer:1985:Postage})
can be added to all the presented range algorithms :
when the condition of the Lemma is
met during the computation, the range is then known~\cite[(10)]{Mossige:1999:Postage}.
For instance, \cref{fig:cover_k15} suggests a speed-up growing linearly
with the number of stamps, that reaches a threshold when
Selmer's Lemma is used to early terminate the range computations.
%
%

\textEnd{
The range algorithms are implemented in the library~\GStamps, and can
be respectively used via \texttt{./bin/srange} for~\cref{alg:mossige},
\texttt{./bin/reach} for~\cref{alg:leo} and \texttt{./bin/krange}
for~\cref{alg:sliding}.

  \begin{figure}[htbp]\centering
    \input{data_cover_s05_s0}

    \input{data_cover_s10_s0}
    \caption{LPSP algorithms,~$s=5$ and $s=10$}
    \label{fig:cover_s05_s10}
  \end{figure}

  \begin{figure}[htbp]\centering
    \input{data_cover_k15_s0}
    \input{data_cover_k15_s1}
    \caption{LPSP algorithms,~$k=15$, early terminated by Selmer's
      lemma~\cite{Selmer:1980:Postage} (effective for $s\geq{40}$).}\label{fig:cover_k15}
  \end{figure}


  \begin{figure}[htbp]\centering
    \input{data_cover_k35_s0}
    \caption{LPSP algorithms,~$k=35$}\label{fig:cover_k35}
  \end{figure}
}

\section{Applications to privacy-preserving cryptographic protocols}\label{sec:applications}
\pratendSetLocal{category=applications}

Private set operation protocols are cryptographic protocols in which two (or more) parties aim to perform some set operations while preserving privacy. For instance in a private set union protocol, a sender holds a set $X$ and a receiver a set $Y$, and the goal is for the receiver to learn the union $X\cup Y$ while it should not learn the intersection $X\cap Y$. Many such protocols, for instance for the union~\cite{DBLP:conf/acns/DumasGGMR25} or the intersection~\cite{DBLP:conf/ccs/ChenLR17,cong_labeled_2021}, rely on polynomial evaluation : The client owns some value $x$ in $\mathbb F$ and the server a degree-$d$ polynomial $P\in\mathbb F[X]$, and the goal is for the client to learn $P(x)$ while neither does the server learns $x$ nor the receiver $P$.

An easy solution for this problem is for the client to encrypt $x$ using a fully homomorphic encryption scheme (FHE)\footnote{Details on fully homomorphic encryption schemes are presented in \cref{app:fhe}.} and send it to the server. The later evaluates its polynomial~$P$ on the encryption of $x$ using the homomorphic properties of the encryption scheme. It obtains an encryption of $P(x)$ that the client can then decrypt to obtain $P(x)$.
A limitation of this solution is the cost of homomorphic
operations. In particular, the efficiency of homomorphic operations is
directly related to the \emph{multiplicative depth} of the function to
evaluate. The evaluation of a degree-$d$ polynomial requires a
multiplicative depth $\log(d)$, which becomes soon the bottleneck in practice.

To reduce the multiplicative depth of polynomial evaluation, the idea
is for the client to send not only its (encrypted) value $x$, but also
some of its powers $x^{a_i}$, $i\in[k]$.
Viewing $A_k = \{a_1 < \dotsb < a_k\}$ as a stamp basis, $n_{2^\ell}(A_k) \ge d$ implies that the server can evaluate any degree-$d$ polynomial $P$ with a multiplicative depth $\ell+1$. Indeed, this means by definition that any power $x^d$ is a product of at most $2^\ell$ powers $x^{a_i}$, which can be computed in depth $\ell$ using a binary tree. The extra depth allows for the final inner product of the coefficients of $P$ by the powers of $x$. As an example, if the client sends $x$, $x^5$ and $x^8$, the server can evaluate any degree-$26$ polynomial in depth $3$ since $n_4(\{1,5,8\}) = 26$. Indeed, $x^{26} = (x^5\cdot x^5)\cdot(x^8\cdot x^8)$.

This idea leads to a trade-off between communication (since several
powers are sent) and multiplicative depth, hence efficiency in
practice. The trade-off is actually even better than it may seem at first sight. Indeed, reducing the multiplicative depth also reduces the size of each ciphertext. Instead of sending one large ciphertext, the sender sends $k$ smaller ciphertexts. The communication still increases, but by a factor less than $k$.
As far as we know, the explicit use of a stamp basis was first suggested by~\cite{cong_labeled_2021} to build an efficient private set intersection protocol ; although no study on the practical efficiency of this approach was provided. We provide such an analysis that confirms the relevance of this approach.
We use our library \GStamps to find small bases allowing to evaluate a
polynomial with a fixed multiplicative depth $L$
for various degrees $d$ as shown in~\cref{tab:stamps}. We denote those constructions as \textbf{base-L} for the use of a basis allowing to do the protocol in depth $L$.
As an example, if we want to perform the protocol with $d=2^{20}$ and under a maximum depth $L=7$, we use our algorithm \texttt{./bin/search} with inputs $s=2^{L-1}$ and $N=d$. It returns the stamp basis $(1,52,705,13100,99644)$ which has a $s$-range of $1782370\geq2^{20}$.
\begin{table}[htbp]
\centering
\caption{$k=\#$denominations for basis obtained with \texttt{./bin/search}}\label{tab:stamps}
\begin{tabular}{|l|r|r|r|r|r|r|r|}
\hline
Constr.&s&k for $d=2^{10}$&$d=2^{12}$&$d=2^{14}$&$d=2^{16}$&$d=2^{18}$&$d=2^{20}$\\
\hline
\textbf{base-5}&16&4&5&6&9&10&12\\
\hline
\textbf{base-6}&32&3&4&5&5&7&8\\
\hline
\textbf{base-7}&64&2&3&4&4&5&\bf 5\\
\hline
\end{tabular}
\end{table}

We compare several protocols for polynomial evaluation using different communication-depth trade-offs in~\cref{tab:Esti}. The \textbf{naive} construction consists in sending only the value $x$, for a multiplicative depth $\log(d)$. A second construction, following~\cite{DBLP:conf/ccs/ChenLR17,DBLP:conf/ccs/Tu0LZ23}, reduces the depth to $O(\log\log d)$ by sending sends encryptions of $x^{2^i}$, $i\in[\log d]$. In fact, this can be viewed as a geometric progression stamp basis with common ratio $2$~(\cref{rem:geom}). We denote this as the \textbf{geom.} construction. 

We compare the communication volumes and the expected running times to homomorphically compute all the powers for each construction for various polynomial degrees.  The communication volumes presented in~\cref{tab:Esti} take into account the number of ciphertexts as well as their sizes. 
As it happens, not all homomorphic multiplication have the same running time. \emph{Deeper} multiplications are faster. 
The timing estimations in~\cref{tab:Esti} take this aspect into account. 
Although it is easy to exactly count the number of multiplications per depth for the \textbf{naive} and \textbf{geom.} constructions, we rely on our program \texttt{./bin/depthrange} from our library for the \textbf{base-L} constructions.

\begin{table}[htbp]
\centering
\caption{Communication volumes (MB) and runtime (s) estimation for various polynomial degrees}\label{tab:Esti}
\begin{tabular}{|l|l|l|l|l|l|l|l|l|l|l|l|l|}
\hline
\multirow{2}{*}{Constr.}&\multicolumn{2}{c|}{$d=2^{10}$}&\multicolumn{2}{c|}{$d=2^{12}$}&\multicolumn{2}{c|}{$d=2^{14}$}&\multicolumn{2}{c|}{$d=2^{16}$}&\multicolumn{2}{c|}{$d=2^{18}$}&\multicolumn{2}{c|}{$d=2^{20}$}\\
\cline{2-13}
&MB&s&MB&s&MB&s&MB&s&MB&s&MB&s\\
\hline
\textbf{naive}&\bf 3.7&36&\bf 4.9&173&\bf 5.0&763&\bf 5.5&3590&\bf 6.1&16032&\bf 6.8&69830\\
\hline
\textbf{geom.}&18.3&26&21.9&100&25.5&382&33.9&1745&38.1&6674&42.3&25676\\
\hline
\textbf{base-5}&7.5&21&9.3&\bf 81&11.1&\bf 311&16.5&1280&18.3&\bf 4898&21.9&19375\\
\hline
\textbf{base-6}&6.6&24&8.7&100&10.8&404&10.8&\bf 1190&15.0&5466&17.0&20683\\
\hline
\textbf{base-7}&5.0&\bf 20& 7.3&98& 9.7&504& 9.7&1315&12.0&5940&12.0&\bf 18916\\
\hline
\end{tabular}
\end{table}

We observe in~\cref{tab:Esti} the relevance of the trade-off for the performance. The use of stamps in geometric progression is worse both in communication, as expected, but also in runtime compared to the use of better bases. We can see that a modest increase in communication, with a factor less than two, can result in a huge runtime gain, almost a $3.7\times$ speedup. As an example, consider the degree $d=2^{20}$. To allow a depth $21$ as required in the naive construction, one has to send a ciphertext of size $6.55$ MB. The answer of the receiver has size $0.25$ MB, for a total of $6.8$ MB. Using instead the construction \textbf{base-7}, the basis has $5$ denominations. Since the required depth is only $7$, each ciphertext has size $2.35$ MB, for a total of $12$ MB. 
To conclude, the use of good stamp bases allows to choose more finely the different parameters of the FHE scheme, and it results in a practical efficiency boost. 
A good trade off between runtime and communication can, suprisingly, be more efficient both in runtime and communication.

\clearpage
\bibliographystyle{plainurl}
\bibliography{bibstamps.bib}

\appendix
\renewcommand{\ProofInAppendix}[1]{}
\section{Appendix}

\subsection{Fully homomorphic encryption (FHE)}\label{app:fhe}

A fully homomorphic encryption scheme is a semantically secure
public-key encryption scheme that allow to perform arithmetic
operation on ciphertexts resulting, after decryption, to operations on
the underlying plaintexts. Namely, a FHE is built with a \textit{key
  generation} algorithm, generating private and public keys, an
\textit{encryption} and a \textit{decryption} algorithm ; the first
takes a plaintext and the public key to output a ciphertext, while the
second inputs a ciphertext and the secret key to output a
plaintext. Basically, the scheme allows to perform \textit{homomorphic
  addition}, \textit{plaintext-ciphertext product} and
\textit{homomorphic product} on ciphertexts, whose result after
decryption to the corresponding operations on plaintexts.

Most of the existing FHE schemes
(TFHE~\cite{DBLP:journals/joc/ChillottiGGI20},
FHEW\cite{DBLP:conf/eurocrypt/DucasM15},
CKKS~\cite{DBLP:conf/asiacrypt/CheonKKS17},
BFV~\cite{DBLP:conf/crypto/BrakerskiV11},
BGV~\cite{DBLP:journals/toct/BrakerskiGV14}) security rely on the
difficulty of the LWE game ; it basically means that a ciphertext is
masked with a noise. Each homomorphic operation increase the size of
this noise and the decryption remains correct as long as this size
does not exceed a threshold. For that reason there exist some
procedures to control the noise growth. The most expensive operation
in term of noise expansion is the homomorphic product ; for that
reason, the focus is put essentially on reducing the amount of
homomorphic products, in particular in sequence. In the ring FHE
schemes (BGV, BFV, CKKS), each homomorphic product is followed by a
\textit{modulus switching} procedure ; the idea is to reduce the size
of the ciphertext space in order to reduce the size of the noise. This
can be done a finite number of time, let say $L$ times, as long as the
ciphertext remains secure. In theory, this $L$ can be as big as
needed, as long as the fresh ciphertext space is big enough. In
practice, setting up a context allowing $10$ modulus switchings, for
example, already implies computational cost overheads, a huge
communication volume to exchange ciphertexts, and possibly can
threaten the security of the scheme. We remark that the amount of
modulus switching corresponds to the multiplicative depth of the
circuit. From now on, we consider that $L$ is the \textit{maximum
  multiplicative depth} allowed by the scheme.

\begin{figure}[htbp]\centering
\input{ctxt_size}
\caption{Experimental results for various maximum depth using openFHE}\label{fig:perdepth}
\end{figure}

\Cref{fig:perdepth} shows some experimental results obtained with an
Intel(R) Xeon(R) Gold 6330 CPU @ 2.00GHz on a single thread using the
BFV scheme from the openFHE
library\footnote{\url{https://github.com/openfheorg/openfhe-development}}.
We generated contexts for various $L$, while keeping the same
plaintext space, which is $\mathbb{F}_p$ with $p$ on $48$ bits, and
with a security level beyond 128 bits. \cref{fig:perdepth} shows that
both the size of a fresh ciphertext and the runtime of one
homomomorphic product on fresh ciphertexts\footnote{The timings are
  obtained as a mean of 100 products.} grow with the maximum depth
allowed by the scheme.

\begin{remark}\label{rem:timingsperslots}
    In practice, one ring FHE ciphertext is big enough to encrypt many
    plaintexts through batching. We denote that as the number of
    \textit{slots} of the ciphertext. Performing a homomorphic
    operation on a ciphertext is forced to act simultaneously on all
    the slots. The different contexts may have various number of slots
    depending on the $L$ chosen, due to the fact that we impose a
    certain security level. Namely, the contexts with $L\in[2,5]$ have
    $2^{14}$ slots, for $L\in[6,11]$ there are $2^{15}$ slots, and for
    bigger $L$ (up to 21 in our experiments), we have $2^{16}$
    slots. For consistency of the experiments, all the experimental
    results given in \cref{sec:applications} are runtimes and volumes
    \textit{per $2^{14}$ slots}. For example, the sizes and runtimes
    for schemes with $L\geq 12$ are, in real life, $4$ times bigger
    that the one in~\cref{fig:perdepth}, but we can technically do $4$
    times the job using SIMD. An important remark is that if not all
    the slots are used, then the results for $L\in[6,11]$ and, even
    more, for $L\geq 12$ are much worse.
\end{remark}

Once the \textit{modulus switching} procedure is no longer doable, or
after each homomorphic products for the FHE schemes on the Torus
(FHEW, TFHE), the \textit{bootstrapping} procedure can be
done~\cite{DBLP:conf/eurocrypt/DucasM15}; it basically consists in a
homomorphic decryption which resets the noise to (almost) the fresh
level. This procedure is extremely expansive in the ring schemes and
requires a specific choice of parameters ; in practice, it is often
tried to avoid this. For those reasons, we can understand how reducing
or atleast controlling the multiplicative depth can matter in
practice, especially for the ring schemes that we are focusing on.



\subsection{Proofs of \cref{sec:polytimeapprox}} \label{sec:proofsapprox}


We first provide the proof of the $s$-range of Alter and Barnett's
construction~\cite{alter_remarks_1977}. We generalize their construction as
follows.

Let $q_1$, \dots, $q_s \ge 1$. We define a stamp basis as the union of $s$
blocks of size $q_1$, \dots, $q_s$. Each block is an arithmetic progression.
The first block is $B_1 = \{1,\dotsc,q_1\}$. Assume that the first $i$ blocks
have been built. Then, the value common difference of the $(i+1)$st block is the
next value in the arithmetic progression of the $i$th block. And the difference
between the largest value of the $i$th block and the smallest of the $(i+1)$st
block is the common difference of the $(i+1)$st block.

That is, let $B_i = \{u_i + t v_i:0\le t < q_i\}$ for $i\in[s]$, where $u_1 =
v_1 = 1$ and $v_{i+1} = u_i+q_iv_i$ and $u_{i+1} = u_i+(q_i-1)v_i+v_{i+1}$.
Define the stamp basis associated to $q_1$, \dots, $q_s$ as $G(q_1,\dotsc,q_s) =
B_1\cup\cdots\cup B_s$.

Alter and Barnett's constructions is the case $q_1 = \dotsb = q_{s-1} = q$ and
$q_s = q+r$ where $k = qs+r$ is the number of denomination. Our improvement is
the case $q_1 = \dotsb = q_{s-r} = q$ and $q_{s-r+1} = \dotsb = q_s = q+1$.

\begin{proposition}
    Let $q_1$, \dots, $q_s \ge 2$ and $G = G(q_1,\dotsc,q_s)$. Then
    $n_s(G) = u_s + q_sv_s-1$.
\end{proposition}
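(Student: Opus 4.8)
The plan is to prove the two bounds $n_s(G)\ge u_s+q_sv_s-1$ and $n_s(G)\le u_s+q_sv_s-1$ separately, writing $N_i=u_i+q_iv_i-1$ for the target $i$-range and $M_i=u_i+(q_i-1)v_i=\max B_i$ for the largest stamp of block~$i$. The relations I would record first are $v_{i+1}=N_i+1$ (the common difference of block $i+1$ is exactly one more than the claimed range of the first $i$ blocks) and $u_{i+1}=M_i+v_{i+1}$ (so $u_{i+1}-1=N_i+M_i$). Eliminating $u$ gives the clean linear recurrence $v_{i+1}=(q_i+2)v_i-v_{i-1}$ with $v_0=v_1=1$, under which the $j$-th stamp of block $i$ is $a_{i,j}=(j+1)v_i-v_{i-1}$ for $1\le j\le q_i$; I would use this form throughout.

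For the lower bound I would run a short double induction on the number of blocks. Let $(I_i)$ be ``every integer of $[0,N_i]$ is reachable with at most $i$ stamps of $B_1\cup\dots\cup B_i$'' and $(I'_i)$ the same statement with $N_i$ replaced by $N_i+M_i$ and the budget $i$ replaced by $i+1$. The base case $(I_1)$ is immediate since $B_1=\{1,\dots,q_1\}$. From $(I_i)$ I get $(I'_i)$ by optionally adding the single stamp $M_i$, using $M_i\le N_i$ so that the shifted interval $[M_i,M_i+N_i]$ overlaps $[0,N_i]$. Finally $(I_i)$ and $(I'_i)$ give $(I_{i+1})$ by splitting $[0,N_{i+1}]$ at $u_{i+1}-1=N_i+M_i$: the lower part is covered by $(I'_i)$, while the upper part $[u_{i+1},N_{i+1}]$ is tiled by the $q_{i+1}$ intervals $[u_{i+1}+tv_{i+1},\,u_{i+1}+(t+1)v_{i+1}-1]$, each obtained from one stamp $u_{i+1}+tv_{i+1}$ of $B_{i+1}$ plus a value of $[0,N_i]$ from $(I_i)$ (this is precisely where $v_{i+1}=N_i+1$ is used). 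Taking $i=s$ yields $n_s(G)\ge N_s$.

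The upper bound, that $N_s+1=v_{s+1}$ cannot be written with at most $s$ stamps, is the real work, and I would prove it by induction on $s$. The enabling observation is a cancellation: since $v_{s+1}-a_{s,j}=(q_s+1-j)\,v_s$ for every $1\le j\le q_s$, removing any single stamp of the top block $B_s$ from a hypothetical representation of $v_{s+1}$ leaves a representation, with one fewer stamp, of a \emph{small multiple} $c\,v_s$ with $c\in\{1,\dots,q_s\}$. Thus the whole matter reduces to a ``multiples lemma'': every representation of $c\,v_s$ with $1\le c\le q_s$ uses at least $s$ stamps. For $c=1$ this is exactly the induction hypothesis at level $s-1$, because every stamp of $B_s$ exceeds $u_s>v_s$ and so cannot participate in building $v_s$.

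The main obstacle is the multiples lemma for $c\ge2$, together with the companion case in which the representation of $v_{s+1}$ uses no stamp of $B_s$ at all. A naive size bound does not close the argument: one checks that $v_{s+1}<s\,M_{s-1}$ for large $s$, so using only the lower blocks cannot be excluded by magnitude alone, and the difficulty is genuinely combinatorial. To handle it I would strengthen the induction hypothesis so that it controls not merely $v_\ell$ but values of the shape $c\,v_\ell$, and the residuals $c'v_\ell+v_{\ell-1}$ that appear when one peels a top-block stamp off such a multiple (the offset $v_{\ell-1}$ being exactly what the arithmetic progression inside a block introduces). Carrying this refined statement down through the recurrence $v_{i+1}=(q_i+2)v_i-v_{i-1}$, and in particular bookkeeping the offsets $v_{i-1}$ so that they never let the stamp count fall below the target, is the technical heart of the proof; everything else is the routine peeling step. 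The hypothesis $q_i\ge2$ enters precisely here, to keep $(v_i)$ and the block maxima $(M_i)$ strictly increasing, on which the bookkeeping relies.
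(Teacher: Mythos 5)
Your lower bound is correct and is essentially the paper's own argument: the pair $(I_i)$, $(I'_i)$ is exactly the paper's simultaneous induction $n_i(G_i)\ge v_{i+1}-1$ and $n_{i+1}(G_i)\ge u_{i+1}-1$, with interval tiling in place of the paper's ``take the largest stamp $u_i+tv_i\le m$'' step; no issue there.

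The upper bound, however, has a genuine gap, and you have located it yourself without closing it. The peeling identity $v_{s+1}-a_{s,j}=(q_s+1-j)v_s$ is correct and reduces the problem to your ``multiples lemma,'' but that lemma is only proved for $c=1$; for $c\ge 2$, and for representations of $v_{s+1}$ (or of $c\,v_s$) that use no stamp of $B_s$, you say you would ``strengthen the induction hypothesis'' to control quantities $c\,v_\ell$ and $c'v_\ell+v_{\ell-1}$ and then carry the bookkeeping through the recurrence $v_{i+1}=(q_i+2)v_i-v_{i-1}$ --- but you never state the strengthened hypothesis, and this is where the entire difficulty lives. Note in particular that peeling a second stamp of $B_s$ off $c\,v_s$ produces $(c-j'-1)v_s+v_{s-1}$, peeling a stamp of $B_{s-1}$ off that produces yet another offset shape, and so on: the family of residuals you must control grows at each level, and it is not clear what single closed-form statement is both strong enough to propagate and true. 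As written, the proof of the hard direction is a plan, not a proof. For comparison, the paper avoids the cascade entirely by working globally: it groups the stamps of block $i$ into a contribution $a_iu_i+b_iv_i$, normalizes an arbitrary representation via three rewriting rules (based on $u_i+q_iv_i=v_{i+1}$, $2v_i=u_i+v_{i-1}$, $u_i=u_{i-1}+(q_{i-1}-1)v_{i-1}+v_i$) into a form with $a_i\le 1$ and $a_i+b_i\le q_i$, proves termination by a lexicographic measure, and then concludes by the telescoping bound $\sum_{i=1}^{s}\bigl(u_i+(q_i-1)v_i\bigr)=\sum_{i=1}^{s}(v_{i+1}-v_i)=v_{s+1}-1<v_{s+1}$. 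If you want to keep your inductive route, the minimum you must supply is a precise statement of the refined hypothesis (covering all residual shapes that peeling can generate) together with its verification; otherwise the argument does not establish $n_s(G)\le u_s+q_sv_s-1$.
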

\begin{proof}
    Let $v_{s+1} = u_s+q_sv_s$, $u_{s+1} = u_s+(q_s-1)v_s+v_{s+1}$ and for $1\le
    i\le s$, $G_i = G(q_1,\dotsc,q_i)$. We start by proving, by induction on
    $i$, that $n_i(G_i) \ge v_{i+1}-1$ and $n_{i+1}(G_i)\ge u_{i+1}-1$.  The
    base case $i = 1$ is clear since $G_1 = \{1,\dotsc,q_1\}$, $v_2 = 1+q_1$ and
    $u_2 = 2q_1+1$. Assume now that $n_{i-1}(G_{i-1})\ge v_i-1$ and
    $n_i(G_{i-1})\ge u_i-1$ for some $i\ge 2$. Any integer $m < u_i$ can be
    written with $i$ stamps from $G_{i-1}\subset G_i$. Consider an integer
    $m\in\{u_i,\dotsc,v_{i+1}-1\}$. Since $v_{i+1} = u_i+q_iv_i$, if we take the
    largest possible stamp $u_i+tv_i\le m$ of $G_i$, $m-(u_i+tv_i) < v_i$.
    Therefore, $m-(u_i+tv_i)$ can be written using $(i-1)$ stamps from $G_{i-1}$
    and $m$ can be written with $i$ stamps from $G_i$. Thus $n_i(G_i) \ge
    v_{i+1}-1$. Now take any $m\in\{v_{i+1},\dotsc,u_{i+1}-1\}$. Since $u_{i+1}
    = u_i+(q_i-1)v_i+v_{i+1}$, the same argument as before proves that using the
    largest possible stamp $u_i+tv_i\le m$, $m-(u_i+tv_i) < v_{i+1}$.
    Therefore, $m-(u_i+tv_i)$ can be written with $i$ stamps from $G_i$, and $m$
    with $(i+1)$ stamps from $G_i$. That is, $n_{i+1}(G_i) \ge u_{i+1}-1$.

    Let us now prove that $n_s(G_s) < v_{s+1} = u_s+q_sv_s$, that is $v_{s+1}$
    cannot be written as a sum of at most $s$ stamps from $G$.

    A stamp in block $i$ has value $u_i + tv_i$ for some $t$. Grouping all
    stamps in a same block, we can write the sum of their values as $a_iu_i +
    b_iv_i$. A sum corresponding to at most $s$ stamps from $G$ is $\sum_{i=1}^s
    a_iu_i + b_iv_i$ with $\sum_i a_i\le s$ and $b_i \le (q_i-1)a_i$ for all
    $i$. We identify the sum with the sequence $((a_i,b_i))_i$.

    We will prove that given such a sequence, we can produce a new sequence
    $((a'_i,b'_i))_i$, with the same sum, with the extra property that for every
    $i >0$, $a'_i\le 1$ and $a'_i+b'_i\le q_i$.  To this end, we use identities:
    \begin{enumerate}
        \item $u_i+q_iv_i = v_{i+1}$;
        \item $2v_i = u_i+v_{i-1}$;
        \item $u_i = u_{i-1}+(q_{i-1}-1)v_{i-1}+v_i$.
    \end{enumerate}
    The first and third ones are the definition of the sequences. The second one
    combines the two other ones: $u_i = u_{i-1}+(q_{i-1}-1)v_{i-1}+v_i$, and
    $u_{i-1}+q_{i-1}v_{i-1} = v_i$, whence $u_i = (v_i-v_{i-1})+v_i$.

    These relations allows us to define three rewriting rules on the sequence
    $((a_i,b_i))_i$ that does not change the value $\sum_i a_iu_i+b_iv_i$:
    \begin{enumerate}
        \item if $a_i>0$ and $b_i\ge q_i$, replace $(a_i,b_i),(a_{i+1},b_{i+1})$
            by $(a_i-1,b_i-q_i),(a_{i+1},b_{i+1}+1)$ (using the first identity);
        \item if $a_i = 0$ and $b_i>q_i$, replace $(a_{i-1},b_{i-1}),(a_i,b_i)$
            by $(a_{i-1},b_{i-1}+1),(a_i+1,b_i-2)$ (using the second identity);
        \item if $a_i > 1$ and $b_i<q_i$, replace $(a_{i-1},b_{i-1}),(a_i,b_i)$
            by $(a_{i-1}+1,b_{i-1}+(q_i-1)),(a_i-1,b_i+1)$ (using the third
            identity).
    \end{enumerate}
    We will apply these rules with $i > 0$. Note that if $i = s$, we may have to
    consider $(a_{s+1},b_{s+1})$. This is possible by extending the sequences
    $(u_i)_i$ and $(v_i)_i$ with $v_{s+1} = u_s+q_sv_s$ and $u_{s+1} =
    u_s+(q_s-1)v_s+v_{s+1}$. We can choose any very large value for $q_{s+1}$ to
    never apply the first or second rule with $i = s+1$.

    Starting from a sum $\sum_i a_iu_i+b_iv_i$, our goal is to use the rewriting
    rules to obtain a sum where for all $i>0$, $a_i \le 1$ and $a_i+b_i \le
    q_i$. The starting point is a sum where $\sum_i a_i \le s$, and $b_i \le
    (q_i-1)a_i$ for each $i$. We call a sequence $((a_i,b_i))_{i\ge 0}$
    \emph{valid} if $a_i+b_i\le\max(1,a_i)q_i$ and $\sum_i a_i \le s$. In
    particular, the starting sequence is valid.

    We choose the largest index $j$ that violates the conditions, that is
    $a_j > 1$ or $a_j+b_j > q_j$. If $j > 1$, we apply the first of the three rules that
    applies to $(a_j,b_j)$, and recurse. We stop when $a_i\le 1$ and $a_i+b_i\le q_i$
    for all $i>0$. We need to prove that this process terminates.

    First assume that the first rule is applied to a valid sequence. Since this
    replaces $(a_{j+1},b_{j+1})$ by $(a_{j+1},b_{j+1}+1)$, the sequence
    becomes invalid if $a_{j+1}+b_{j+1} = q_{j+1}$. (Recall that $a_{j+1} \le 1$ by
    maximality of $j$.) If $a_{j+1} = 1$, the first rule is again applied to
    replace $(1,q_{j+1})$ by $(0,0)$ and $(a_{j+2},b_{j+2})$ by $(a_{j+2},b_{j+2}+1)$.
    The situation may repeat a few times, but after a few iterations the
    rewriting rule will encounter a pair $(a_{j+t},b_{j+t})$ where either
    $a_{j+t}+b_{j+t} < q_{j+t}$ or $a_{j+t} = 0$. In the first case, the sequence is valid
    again. In the second case (with $b_{j+1} = q_{j+t}+1$), the second rule is applied and
    the sequence becomes valid again too. In both cases, after these few steps,
    the sequence is valid and either $j$ has decreased or $a_j$ has decreased.
    Note also that to apply the first rule to a valid sequence, $a_j$ has to be
    at least $2$ (since $q_j\le b_j \le (q_j-1)a_j$). After these steps, $a_j$
    remains non-zero.
    The same phenomenon occurs when the third rule is applied to a valid sequence.
    It replaces $(a_{j-1},b_{j-1}),(a_j,b_j)$ by
    $(a_{j-1}+1,b_{j-1}+(q_{j-1}-1)),(a_j-1,b_j+1)$, with the condition that $a_j>1$
    and $b_j < q_j$. The sequence becomes invalid if $a_j = 2$ and $b_j=q_j-1$.
    But then the pair becomes $(1,q_j)$ and the next rule to be applied is the first
    rule. As before, after a few steps, the sequence becomes valid again and
    either $j$ or $a_j$ has decreased, and $a_{j-1}$ is non-zero .
    The second rule only applies to invalid sequences. By grouping the rewriting
    steps into \emph{meta-steps}, we get a rewriting process where the sequence
    stays valid and the pair $(j,a_j)$ decreases (for the lexicographic order)
    at each meta-step. The process thus stops with a valid sequence.

    Assume therefore that $v_{s+1}= u_s+q_sv_s$ admits a representation with $s$ stamps. Then
    $v_{s+1} = \sum_i a_iu_i+b_iv_i$ where $a_i\in\{0,1\}$ and
    $0\le b_i < q_i$ for $i >0$,
    and $\sum_i a_i \le s$. 
    After each meta-step, $a_i\neq 0$ for at least one index $i \le j$. In particular, 
    $\sum_{i=1}^s a_iu_i+b_iv_i > 0$ at the end of the process.
    Therefore, the pairs $(a_i,b_i)$, $i > s$, must be zero. Assume otherwise
    that $(a_j,b_j)\neq(0,0)$ for some $j > s$. The total sum is then at least
    $(a_ju_j+v_jb_j) + \sum_{i=1}^s a_iu_i+b_iv_i > a_ju_j+b_jv_j$. Since
    $u_j$, $v_j\ge v_{s+1}$, the sum is $> v_{s+1}$, a contradiction.
    Thus $v_{s+1} = \sum_{i=1}^s a_iu_i+b_iv_i$, with
    $\sum_i a_i \le s$. By validity of the sequence $a_iu_i+b_iv_i \le
    u_i+(q_i-1)v_i$ for all $i$. 
    The largest possible sum is then
    $\sum_{i=1}^s u_i+(q_i-1)v_i = \sum_{i=1}^s v_{i+1}-v_i = v_{s+1}-1$.
    Therefore, $v_{s+1}$ cannot be reached and $n_s(G_k) \leq v_{s+1}-1$.
\end{proof}

\textEnd[category=polytimeapprox]{} 
\printProofs[polytimeapprox]

\subsection{Proofs and figures of \cref{sec:algoforlargeenvelopes}}\label{appendix:proofsalgoforlargeenvelopes}

\textEnd[category=algoforlargeenvelopes]{}
\printProofs[algoforlargeenvelopes]



\end{document}